\renewcommand{\title}[1]{\begin{center}{\large{#1}}\end{center}}
\renewcommand{\author}[1]{{\bf{#1}}}
\newcommand{\affiliation}[1]{{\textsc{{#1}}}}
\newcommand{\email}[1]{\texttt{#1}}
\begin{document}
\title{Two-Level Fingerprinting Codes: Non-Trivial Constructions}

\begin{center}
\author{Penying Rochanakul}\\
\affiliation{Royal Holloway, University of London}\\
\email{P.Rochanakul@rhul.ac.uk}
\end{center}
%
%
%
\newtheorem{theorem}{Theorem}
\newtheorem{lemma}[theorem]{Lemma}
\newtheorem{proposition}[theorem]{Proposition}
\newtheorem{corollary}[theorem]{Corollary}
\theoremstyle{definition}
\newtheorem{definition}{Definition}
\newtheorem{example}{Example}
\theoremstyle{remark}
\newtheorem{remark}[theorem]{Remark}
\newtheorem{mycon}{Construction}
%
\newenvironment{pf}[1][Proof]{\begin{trivlist}
\item[\hskip \labelsep {\bfseries #1}]}{\end{trivlist}}
\newenvironment{dpf}[1][Direct proof of Corollary \ref{col} in the case $T = 2$.]{\begin{trivlist}
\item[\hskip \labelsep {\bfseries #1}]}{\end{trivlist}}
\newenvironment{pfm}[1][Proof of Theorem \ref{mainw}.]{\begin{trivlist}
\item[\hskip \labelsep {\bfseries #1}]}{\end{trivlist}}
\newcommand{\desc}{\mathrm{desc}}



\begin{abstract} 
We extend the concept of two-level fingerprinting codes, introduced by Anthapadmanabhan and Barg (2009) in context of traceability (TA) codes  \cite{Antha09}, to other types of fingerprinting codes, namely identifiable parent property (IPP) codes, secure-frameproof (SFP) codes, and frameproof (FP) codes. We define and propose the first explicit non-trivial construction for two-level IPP, SFP and FP codes.
\end{abstract}


%
%
\section{Introduction}


Fingerprinting codes have been studied for more than 15 years due to their applications in digital data copyright protection and their combinatorial interest. In order to trace back to the source of pirate piece of digital data, the data distributor hides a unique mark called a \textit{fingerprint} in each legal copy. 
The fingerprint can be viewed as a codeword length $\ell$, which each component is picked from alphabet $Q$ of cardinality $q$. Since all fingerprints are unique, if a naive user tries to distribute his or her copy illegally, the fingerprint will lead back to its owner right away. However, if a group of users form a \textit{coalition} and pool their copies together, 
it becomes a lot more complicated to trace back to the coalition members. It is assumed that the coalition can only produce data fingerprinted by words such that each component agrees with at least one of the codewords in the coalition: an element of the \textit{descendant} set of the codewords.
We define descendant again clearly in the next section. Once such a descendant is given, TA codes are designed to trace back at least one member of the coalition, provided that the size of the coalition does not exceed a certain parameter $t$. The user with the codeword that is most similar to the given descendant, is guaranteed to be a member of the coalition. However, there is no guarantee the tracing result is correct if the size of the coalition is greater than $t$. The first attempt to gain more information about the coalition when its size is bigger than $t$ was by Anthapadmanabhan and Barg (2009) in \cite{Antha09} who defined the concept of two-level fingerprinting codes; from now on we refer to their codes as two-level traceability codes. In Anthapadmanabhan and Barg's setting, the users are divided into various groups; for instance, 
by dividing the distribution area into several geographic regions, and collecting users from the same region into one group. As in traditional (one-level) 
codes, when given an illegal copy produced by a coalition of users, the codes can be used to identify one of the guilty users if the coalition size is less than or equal to a 
certain threshold t. However, even when the coalition is of a larger size $T$, where $T>t$, the decoder still provides partial information by tracing 
one of the groups containing a guilty user. The group that contains the codeword that is closest to the illegal copy must contain at least one member of the coalition. However, they provide no explicit construction of two-level traceability codes apart from using random codes when $t=1$ or $t=2$. 
IPP, SFP and FP codes have been widely studied as replacements for, or weakenings of, TA codes \cite{Boneh98, Stinson97}.
In this paper, we aim to explore TA, IPP, SFP and FP codes in the two-level context. Although TA codes have a very efficient algorithm for traitor tracing and very interesting in terms of applications, there is no explicit construction for two-level traceability codes available, except in the trivial case.  
IPP codes can replace TA codes, since they also have tracing ability when a coalition under a threshold $t$. Moreover, under the same parameters, such as length and alphabet size, IPP codes are much bigger. Therefore, it is important to study two-level IPP codes. We also study SFP and FP codes, which focus on preventing innocent users from being framed, and analyze their interesting combinatorial properties. We define each type of codes properly in the next section. Our main result (Theorems \ref{2lfp}, \ref{2lsfp}, and \ref{2lipp}) is a construction for two-level IPP, SFP and FP codes when the number of the groups is small, i.e. less than or equal to the alphabet size. This is actually the first explicit non-trivial constructions for two-level fingerprinting codes. 

More details on fingerprinting codes' applications and combinatorial properties can be found in a survey by Blackburn \cite{Simon03}. 
We also suggest reading these prior works for more in-depth details: see \cite{Naor94} and  \cite{Chor94} for the concept of traitor tracing along with traceability;  see  \cite{Simon09},  \cite{Staddon00}, and  \cite{Stinson98} for further combinatorial properties of TA codes; more details about FP and SFP codes are available in  \cite{Boneh98} and  \cite{Stinson97}, respectively.

The different types of codes are defined in Section 2 and the relationships between the definitions are explored. In Section 3, we provide our proposed construction for two-level codes. 
Section 4 is a short conclusion containing open problems.
%
%
\section{Definitions and Relationships}
In this section, we restate the definitions of the classical one-level codes we are interested in, then define the corresponding two-level codes.

\subsection{One-Level Codes}
Before giving the codes' definitions, it is necessary to introduce some notation. Let $C$ be a code of length $\ell$ on an alphabet $Q$ of (finite) size $q$. Then, we issue data with a hidden different fingerprint from $C$ to each user. The elements of $C$ are called \textit{codewords}. The \textit{hamming distance} between codewords $x, y$ will be written as $d_H(x,y)$. For any $X \subseteq Q^{\ell}$ and $y \in Q^{\ell}$, let $d(X)$ denote the \textit{minimum distance} of $X$ and let $d_H(X, y) = \displaystyle\min_{x\in X}  d_H(x, y)$. 

We call any subset $C \subseteq Q^{\ell}$, a $q$-ary length $\ell$  
code.
For each $x \in Q^{\ell}$, we write $x_i$ for the $i$th component of $x$. For $X \subseteq Q^{\ell}$, we define \textit{the set of descendants} of $X$ to be the subset $\desc(X) \subseteq  Q^{\ell}$ given by
\begin{align*}
\desc(X) = \{d \in Q^{\ell} : \forall i \in \{1, 2, ..., \ell\},  \exists x \in X \text{ such that } d_i = x_i\}.
\end{align*}
For example, let $P = \{1100, 2102, 1122\}$, then
\begin{align*}
\desc(P) = \{1100, 1102, 1120, 1122, 2100, 2102, 2120, 2122\}.
\end{align*}
Let $t$ be a positive integer. For a code $C$ define the $t$-\textit{descendant code} of $C$, denoted $\desc_t(C)$, by
\begin{align*}
\desc_t(C) = \displaystyle\bigcup_{\substack{X \subseteq C \\ |X| \leq t}} \desc(X).
\end{align*}

\begin{definition} \label{1ld}Let $C$ be an $q$-ary length $\ell$ code and let $t$ be a positive integer.
\begin{enumerate}
  \item[(i)] {$C$ has the $t$-\textit{frameproof} property (or is $t$-FP) if for all $X \subseteq C$ such that $|X| \leq t$, it holds that}
      \begin{align*}
            \desc(X) \cap C \subseteq X
      \end{align*}
  \item[(ii)] {$C$ has the $t$-\textit{secure-frameproof} property (or is $t$-SFP) if for all $X_0, X_1 \subseteq C$ of size at most $t$, if $\desc(X_0) \cap \desc(X_1) \not = \emptyset$, then $X_0 \cap X_1 \not = \emptyset$.}
  \item[(iii)] {$C$ has the $t$-\textit{identifiable parent property} (or is $t$-IPP) if for all $x \in \desc_t(C)$, it holds that}
      \begin{align*}
            \displaystyle\bigcap_{\substack{X \subseteq C : |X| \leq t \\ x \in \desc(X)}} X \not = \emptyset.
      \end{align*}
  \item[(iv)] {$C$ has the $t$-\textit{traceability} property (or is $t$-TA) if for all $X \subseteq C$ that $|X| \leq t$ and for all $x \in \desc(X)$, for all $z \in C$ with $d_H(x,z)$ minimal
, $z \in X$.}

\end{enumerate}
\end{definition}

Presuming that the coalition size is at most $t$. None of the coalition under FP codes can produce a codeword that is not belong to the coalition. In SFP codes, the descendant sets of codewords from two disjoint coalitions are always disjoints. All the coalition of IPP codes that can produce the same words, own at least one codeword in common. Lastly, in TA codes, given a descendant of any coalition under TA code, a codeword that is most similar to the given descendant, is always a member of the coalition.

It is not difficult to check from the definitions that the relationships among different types of codes are as follows; $t$-TA codes are $t$-IPP codes, $t$-IPP codes are $t$-SFP codes and $t$-SFP codes are $t$-FP codes. 


\subsection{Two-Level Codes}
In this section, we extend the concept of two-level fingerprinting codes, introduced by Anthapadmanabhan and Barg (2009) in context of traceability (TA) codes  \cite{Antha09}, to IPP, SFP and FP codes. Then, we present an overview of relationships between different types of fingerprinting codes. 

The motivation behind the definitions of two-level fingerprinting codes is to gain more information from a word produced by a coalition that is bigger than the first threshold $t$. Consider the following scenario. The content distributor distributes the same number of distinct digital copies to each company. Then those companies assign each copy in their hands to an individual employer. Here each company is acting as a group in our two-level model. Once a piracy has occurred, apart from tracing back to an individual traitor or protecting an innocent user from being framed as can be achieved by one-level codes, one might be interested in just tracing back to a company that employs one of the coalition members and sue the whole company (two-level TA and two-level IPP), or to make sure that none of those companies can cooperate and frame the other innocent companies without getting one of their employee involved (two-level SFP and two-level FP). 

In traditional one-level codes, we assign a different fingerprint from $C$ to each user. Here $C$ is a code of length $\ell$ over an alphabet $Q$ of (finite) size $q$. In two-level codes, we partition a one-level code $C$ into $g$ disjoint sets of $p$ elements each, denoted by $C_1, C_2,...C_{g}$. Then we call $C$ a $q$-ary length $\ell$ two-level code, containing $g$ groups of size $p$. Hence, $C=C_1 \cup C_2 \cup ... \cup C_{g}$, we have $|C_i|=p$ for all $i,j \in \{1, 2, ..., g\}$, and $C_i \cap C_j = \emptyset$ when $i \not = j$. For a codeword $c \in C_i$, let $\mathcal{G}(c)$ represent its group index $i$.

\begin{definition} Let $C=C_1 \cup C_2 \cup ... \cup C_{g}$ be a two-level $q$-ary length $\ell$ code and let $T, t$ be positive integers where $T \geq t$.
\begin{enumerate}
  \item[(i)] {$C$ has the $(T,t)$-\textit{frameproof} property (or is $(T,t)$-FP) if}
\begin{enumerate}
  \item {$C$ is $t$-FP when viewed as a $q$-ary length $\ell$  
  code,} and
  \item {for all $X \subset C$ such that $|X| \leq T$ and for all $x \in \desc(X) \cap C$, $\mathcal{G}(x) \in  \mathcal{G}(X)$.}
\end{enumerate}
  \item[(ii)] {$C$ has the $(T,t)$-\textit{secure-frameproof} property (or is $(T,t)$-SFP) if}
\begin{enumerate}
  \item {$C$ is $t$-SFP when viewed as a $q$-ary length $\ell$
  code,} and
  \item {for all $X_0, X_1 \subseteq C$ of size at most $T$, if $\desc(X_0) \cap \desc(X_1) \not = \emptyset$, then $ \mathcal{G}(X_0) \cap  \mathcal{G}(X_1) \not = \emptyset$.}
\end{enumerate}
  \item[(iii)] {$C$ has the $(T,t)$-\textit{identifiable parent property} (or is $(T,t)$-IPP) if}
\begin{enumerate}
  \item {$C$ is $t$-IPP when viewed as a $q$-ary length $\ell$ 
   code,} and
  \item {for all $x \in \desc_{T}(C)$,}
  \begin{align*}
 \displaystyle\bigcap_{\substack{X \subseteq C : |X| \leq T \\ x \in \desc(X)}} \mathcal{G}(X) \not = \emptyset.
  \end{align*}
\end{enumerate}
  \item[(iv)] {$C$ has the $(T,t)$-\textit{traceability} property (or is $(T,t)$-TA) if}
\begin{enumerate}
  \item {$C$ is $t$-TA when viewed as a $q$-ary length $\ell$ 
   code,} and
  \item {for all $X \subseteq C$ that $|X| \leq T$ and for all $x \in \desc(X)$,}\\ for all $z \in C$ with $d_H(x,z)$ minimal (i.e. $d_H(x,z) = d_H(C,x)$), $\mathcal{G}(z) \in \mathcal{G}(X)$.
\end{enumerate}
\end{enumerate}
We refer to $(T,t)$-FP codes, $(T,t)$-SFP codes, $(T,t)$-IPP codes and $(T,t)$-TA codes as two-level codes, and refer to all codes in Definition \ref{1ld} as one-level codes.
\end{definition}

The relationships among different types of two-level codes and one-level codes can be illustrated as in the following diagram. 
\begin{table}[h]
\begin{center}
\begin{tabular}{c c c c c}
$T$-TA & $\Longrightarrow$ & $(T,t)$-TA & $\Longrightarrow$ & $t$-TA\\
$\Downarrow$&&$\Downarrow$& & $\Downarrow$\\
$T$-IPP & $\Longrightarrow$ & $(T,t)$-IPP & $\Longrightarrow$ & $t$-IPP\\
$\Downarrow$&&$\Downarrow$& & $\Downarrow$\\
$T$-SFP & $\Longrightarrow$ & $(T,t)$-SFP & $\Longrightarrow$ & $t$-SFP\\
$\Downarrow$&&$\Downarrow$& & $\Downarrow$\\
$T$-FP & $\Longrightarrow$ & $(T,t)$-FP & $\Longrightarrow$ & $t$-FP.\\
\end{tabular}
\end{center}
\end{table}

The proofs of the implications are straightforward from the codes' definitions. We state and prove the next theorem as an example. 

\begin{theorem}\label{2taisipp} A $(T, t)$-TA code is a $(T, t)$-IPP code.
\end{theorem}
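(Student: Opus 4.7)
The plan is to verify the two clauses of the $(T,t)$-IPP definition separately, both of which fall out quickly from the hypothesis.

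First I would dispatch clause (a). Since $C$ is $(T,t)$-TA, in particular $C$ is $t$-TA when viewed as a one-level code. The paper has already noted the chain of one-level implications $t$-TA $\Rightarrow$ $t$-IPP $\Rightarrow$ $t$-SFP $\Rightarrow$ $t$-FP, so $C$ is $t$-IPP as a one-level code, giving clause (a) of the $(T,t)$-IPP definition for free.

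Next I would prove clause (b). Fix an arbitrary $x \in \desc_T(C)$. The key trick is to exhibit a single group index that lies in $\mathcal{G}(X)$ for every coalition $X \subseteq C$ with $|X| \leq T$ and $x \in \desc(X)$. Since $C$ is nonempty and finite, choose $z \in C$ with $d_H(x,z) = d_H(C,x)$ (a nearest codeword to $x$). I claim $\mathcal{G}(z)$ is the required witness. Indeed, for any such $X$, the two-level TA property applied to the pair $(X, x)$ says that every $z' \in C$ minimizing $d_H(x, z')$ satisfies $\mathcal{G}(z') \in \mathcal{G}(X)$; in particular our fixed $z$ satisfies $\mathcal{G}(z) \in \mathcal{G}(X)$. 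Since this holds for every admissible $X$, we obtain
\begin{equation*}
\mathcal{G}(z) \in \bigcap_{\substack{X \subseteq C : |X| \leq T \\ x \in \desc(X)}} \mathcal{G}(X),
\end{equation*}
so this intersection is nonempty, establishing clause (b).

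There is no real obstacle here: the main subtlety is just recognising that the same nearest codeword $z$ works uniformly over all coalitions $X$ containing $x$ in their descendant set, because the definition of a nearest codeword depends only on $x$ and $C$, not on $X$. Once that observation is made, the two-level TA hypothesis is applied once per $X$ to conclude. So the full proof should be only a few lines.
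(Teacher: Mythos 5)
Your proof is correct and follows essentially the same route as the paper's: the paper also reduces clause (a) to the one-level implication and, for clause (b), uses the set $D_x$ of all nearest codewords to $x$ (rather than a single chosen $z$) as the uniform witness whose group indices lie in $\mathcal{G}(X)$ for every admissible coalition $X$. The single-witness versus whole-set choice is an immaterial difference.
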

\begin{proof} Let $C$ be a $(T, t)$-TA code. Accordingly, $C$ is $t$-TA when viewed as a one-level code. That is, $C$ is a $t$-IPP code.

Let $x \in \desc_{T}(C)$ and let $D_x$ be a set of all elements $z$ in $C$ that give the minimum value of $d_H(z,x)$.
Let $U \subseteq C$ be any subset of size at most $T$ such that $x \in \desc(U)$.
By the definition of two-level traceability code, we know that  for all $z \in \mathcal{G}D_x$, $\mathcal{G}(z) \in \mathcal{G}(U)$. Hence $\mathcal{G}(D_x) \subseteq \mathcal{G}(U)$. Then, we have
\begin{align*}
\mathcal{G}(D_x) \subseteq \displaystyle\bigcap_{\substack{X \subseteq C : |X| \leq T \\ x \in \desc(X)}} \mathcal{G}(X).
\end{align*}
Consequently,
\begin{align*}
\displaystyle\bigcap_{\substack{X \subseteq C : |X| \leq T \\ x \in \desc(X)}} \mathcal{G}(X) \not = \emptyset.
\end{align*}
Therefore, $C$ is a $(T, t)$-IPP code.
\end{proof}
The proof of the rest of the relationships between the different type of codes can be found in \cite{thesis}. 

As a consequence of the definitions of codes, all two-level fingerprinting codes possess the properties of one-level codes. So, it is natural to construct two-level fingerprinting codes from existing one-level fingerprinting codes. We propose our explicit construction in the next section.

%
%
\section{Code Constructions}
In this section, we aim to construct two-level fingerprinting codes from existing one-level fingerprinting codes. Our construction works provided the number of groups is at most the alphabet size. Our construction begins with a one-level code, removes some codewords, groups and modifies the remaining codewords. The results are two-level codes with are guaranteed to be at least half the size of their original codes. 

{Straight from the definitions, it is easy to see that the upper bounds on the size of one-level codes are also relevant to two-level codes. In fact, rather than using our construction, the following example is better for some special cases. The construction here is simple and provides two-level codes that meet the lower bounds of one-level codes.} 

\begin{example}\label{sepa}
Let $\ell$ be a positive  integer greater than 2, and let $Q$ a finite field of cardinality at least $\ell$. Let $\alpha_1, \alpha_2, ...,\alpha_{\ell}$ be distinct elements of $Q$. Consider the following $t$-FP code $C=\{(f(\alpha_1),f(\alpha_2),...,f(\alpha_{\ell}): f\in Q[X] \text{ and } \deg f < \lceil \ell/t\rceil\}$ (see \cite{SimonF03}). Then, for any integer $T>t$,  there exists a $(T, t)$-FP code $C^{\prime}$ of the same cardinality as $C$ with $q$ groups, where $q=|Q|$.

\begin{proof} Partition $C$ into $q$ groups, $C_1, C_2, ..., C_q$, by letting  $C_i=\{x \in C:  x_1=i\}$. Since $|C|=q^{\lceil \ell/t\rceil}$ and $C^{\prime}=\displaystyle\bigcup_{i=1}^q C_1$, now we have a code $C^{\prime}=C$ containing $q$ groups of cardinality $q^{\lceil \ell/t\rceil-1}$.  It is easy to see that none of the codewords in $C^{\prime}$ can frame the codeword outside its own group, since any pair of codewords from the different groups have different symbols in the first co-ordinate. Together with $t$-FP property that $C^{\prime}$ inherits directly from $C$, $C^{\prime}$ is a $(T, t)$-FP code of the same cardinality as $C$.
\end {proof}
\end{example}
Example $\ref{sepa}$ has shown that there exist $(T, t)$-FP codes as big as $(T, t)$-FP codes in some cases. In fact, for any $t$-IPP, $t$-SFP or $t$-FP code $C$, if there exists an $i \in \{1,2,...,\ell\}$ that only $g \leq q$ symbols from $Q$ are used in the $i$th co-ordinate and they are uniformly distributed, we can construct two-level codes by partitioning its codewords into $g$ groups according to the $i$th co-ordinate, then obtain a two-level code $C^{\prime}$ with the same cardinality as $C$. We exclude full proof from this paper: see \cite{thesis} for details. 

There are one-level codes that the distribution of alphabet symbols in any co-ordinate is non-uniform (and there are examples that are the largest known for some parameters) In this case, the above simple two-level construction cannot be used. The construction we propose in the next subsection is general enough to work in these cases, though at a cost of reducing the size of the code by a factor of up to 2.

\subsection{Main Theorem}
In this section, we aim to construct two-level fingerprinting codes from existing one-level fingerprinting codes. Our construction works provided the number of groups is at most the alphabet size. Our construction begins with a one-level code, removes some codewords, groups and modifies the remaining codewords. The results are two-level codes with are guaranteed to be at least half the size of their original codes. 

The next theorem is the core of our construction.
\begin{theorem}\label{2levelcon}
Let $q, g$ and $\ell$ be integers greater than 1, where $g \leq q$. Let $C$ be an $q$-ary length $\ell$ 
code. Then there exists a $q$-ary length $\ell$ 
code $C^{\prime}$ of cardinality at least $\frac{|C|}{2}$, where $C^{\prime}$ possesses the following properties;
\begin{enumerate}
\item there exists an injection from $C^{\prime}$ to $C$ with changes occur only in the first co-ordinate of the codewords, 
\item $C^{\prime}$ can be partitioned into $g$ groups of the same size, each at least $\left\lceil\frac{|C|}{2g}\right\rceil$,
\item the first co-ordinate of codewords in each group of  $C^{\prime}$ are distinct from those of any other group.
\end{enumerate}
\end{theorem}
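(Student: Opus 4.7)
The plan is to build $C'$ in three stages. First, I would fix a partition $Q = Q_1 \sqcup \ldots \sqcup Q_g$ of the alphabet into $g$ pieces of nearly equal size (possible since $g \le q$; e.g.\ $|Q_i| \in \{\lfloor q/g \rfloor, \lceil q/g \rceil\}$). Every codeword in group $i$ of $C'$ will be required to have first coordinate in $Q_i$, which immediately yields property~3.

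Next, I would group the codewords of $C$ by their tail $c^\ast = (c_2, \ldots, c_\ell)$; within a tail-class of size $n_r$ the codewords have pairwise distinct first coordinates. Because we are allowed to modify only the first coordinate, any codeword of the class may be reassigned to any group, subject only to the constraint that at most $|Q_i|$ members of a class land in group $i$ (otherwise two would need the same first coordinate inside $Q_i$). Since $\sum_i |Q_i| = q \ge n_r$, every codeword in every class could in principle be placed if we did not also insist on equal group sizes.

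To balance the groups I would distribute each tail-class cyclically across the $g$ groups, giving $\lfloor n_r/g \rfloor$ or $\lceil n_r/g \rceil$ codewords to each, choosing for each class which groups receive the ``extras'' greedily so as to top up whichever group currently has the fewest codewords. The capacity $|Q_i| \ge \lfloor q/g \rfloor$ accommodates $\lceil n_r/g \rceil$ entries per class, with the small caveat that when $g \nmid q$ the extras should be routed to the larger $Q_i$'s first. After every class is processed, each group has some size $s_i$ with $\sum_i s_i = |C|$; truncating each group to $s := \min_i s_i$ produces the desired $C'$ and establishes properties~1 and~2 (the injection $C' \hookrightarrow C$ is the bookkeeping record of which original codeword each new codeword came from, changing only the first coordinate).

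The hard part is proving the balance estimate $s \ge \lceil |C|/(2g) \rceil$. Round-robin alone has per-class imbalance of up to $g-1$, which could in principle accumulate, so the greedy choice of where to place the extras is crucial: I would try to prove by induction on the processed tail-classes that $\max_i s_i - \min_i s_i$ stays bounded by a constant (independent of $|C|$), which together with $\sum_i s_i = |C|$ forces $\min_i s_i \ge \lceil |C|/(2g) \rceil$ whenever $|C|$ is sufficiently large. If that greedy analysis becomes delicate, an alternative is to cast the assignment as an integer bipartite flow problem (tails on one side, groups on the other, with capacities $|Q_i|$) and use a Hall-type argument to certify that a balanced integer solution of the required size exists.
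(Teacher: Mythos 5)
Your overall strategy (partition the alphabet into $g$ blocks $Q_1,\dots,Q_g$, classify codewords by their tail, and reassign first coordinates within capacity constraints) is a genuinely different route from the paper, which instead classifies codewords by their \emph{first} coordinate, splits the large first-coordinate classes into blocks of exactly $p=\left\lceil\frac{|C|}{2g}\right\rceil$ codewords, greedily amalgamates the small leftover classes into blocks of size between $p$ and $2p-2$, relabels first coordinates with unused symbols, and then verifies by a short count that at most $|C|-gp\leq\frac{|C|}{2}$ codewords are discarded. Your route could in principle work, but as written it has a genuine gap exactly at the step you flag as ``the hard part'': the balance estimate is not proved, and the argument you sketch for it is not correct. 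The greedy ``give the extras to the currently smallest group'' induction does \emph{not} keep $\max_i s_i-\min_i s_i$ bounded by a constant once the capacities $|Q_i|$ interfere. Concretely, take $q=3$, $g=2$, $|Q_1|=2$, $|Q_2|=1$, and let every tail-class be full ($n_r=3$): each class is forced to send two codewords to group $1$ and only one to group $2$, regardless of which group is currently smaller, so the spread grows linearly in the number of classes. (The conclusion $\min_i s_i\geq\left\lceil\frac{|C|}{2g}\right\rceil$ still happens to hold there, but only because $2g\left\lfloor q/g\right\rfloor\geq q$, i.e.\ because of the factor $2$ in the theorem --- not because the spread is bounded, so the inductive statement you propose to prove is false and cannot carry the argument.)

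Two further mismatches with the statement: your fallback conclusion ``$\min_i s_i\geq\left\lceil\frac{|C|}{2g}\right\rceil$ whenever $|C|$ is sufficiently large'' introduces a largeness hypothesis the theorem does not have, and the alternative Hall-type/flow formulation is only gestured at --- certifying that a feasible integer assignment with every group of size at least $\left\lceil\frac{|C|}{2g}\right\rceil$ exists is precisely the quantitative content that needs proving, so invoking it does not close the gap. If you want to salvage your framework, the estimate should be run per group rather than via spread: bound group $i$'s intake from class $r$ below by something like $\min\bigl(\text{its fair share},\left\lfloor q/g\right\rfloor\bigr)$ and use $\left\lfloor q/g\right\rfloor\geq\frac{q}{2g}\geq\frac{n_r}{2g}$, summing over classes. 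By contrast, the paper's split--amalgamate--replace argument gets the bound directly from the identity that the discarded codewords number $\bigl(\sum_{a\in Q}\beta_a\bigr)-(g-v)p=|C|-gp\leq\frac{|C|}{2}$, with no balancing process to analyse.
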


The explicit construction of two-level codes is embedded in the proof of Thorem \ref{2levelcon}. Before showing the detailed proof, we provide the following example to give a rough idea about how to construct $C^{\prime}$ from a given code $C$ satisfying Theorem \ref{2levelcon}.

\begin{example}\label{2con}
Let $n=91$, $q=11$, $g=9$ and $p=\left\lceil\frac{n}{2g}\right\rceil=6$. Let $g_1, g_2,..., g_q$ be number of codewords of $C$ beginning with each symbol. Suppose we have:
\begin{align*}
g_1&=4 &
g_2&=5 &
g_3&=10 &
g_4&=11\\
g_5&=17 &
g_6&=5 &
g_7&=2 &
g_8&=4\\
g_9&=18 &
g_{10}&=10 &
g_{11}&=5.
\end{align*}
Our aim is to form new $9$ groups of size $6$, where  the first co-ordinate of codewords in each new group are distinct from any other groups. Our method can be divided in to 3 main steps: splitting, amalgamating and replacing. {We illustrate the construction of $C^{\prime}$ in Figure} \ref{tablefig}.

\noindent\textbf{Step 1: Splitting}\\ 
The purpose of this step is to split all big groups, which contain $p$ codewords or more, into one or more smaller groups of size at least $p$.
Observe that $g_3,g_4$ and $g_{10}$ provide 1 group each, while $g_5$ and $g_9$ give 2 and 3 groups, respectively. 
We now have 8 groups with only 5 different first co-ordinates. Remove any 3 other groups to obtain 3 more first co-ordinates, let them be $g_2$, $g_6$ and $g_{11}$. Hence, 2, 6 and 11 have become unused  first co-ordinates.

\noindent\textbf{Step 2: Amalgamating}\\ In this step, we aim to create more groups, of size at least $p$, by merging at least 2 smaller groups together.
So, we merge the remaining groups together to obtain the last group of size $g_1+ g_7+g_ 8=10$. 

\noindent\textbf{Step 3: Replacing}\\Since some  first co-ordinates of the groups we have constructed are repeated, we replace them with those unused first co-ordinates in this step. Then, reduce the number of codewords in each group to $p$. Here the repeated  first co-ordinates are 5 and 9, replace the first co-ordinate of codewords in those "repeated" groups (1 group from $g_5$ and 2 groups from $g_9$) by the unused  first co-ordinates; 2, 6, 11.

The result after completing these 3 steps and removing any extra elements is $9$ disjoint groups of size $6$, where the first co-ordinate of codewords in each group are distinct from any other groups. The following table illustrates what we did earlier. We abuse the notation and use $k_i$ for the $i$th codeword in group $t$, i.e. the $i$th codeword beginning with symbol $t$.

\begin{figure}
\begin{center}  
\includegraphics[scale=0.80]{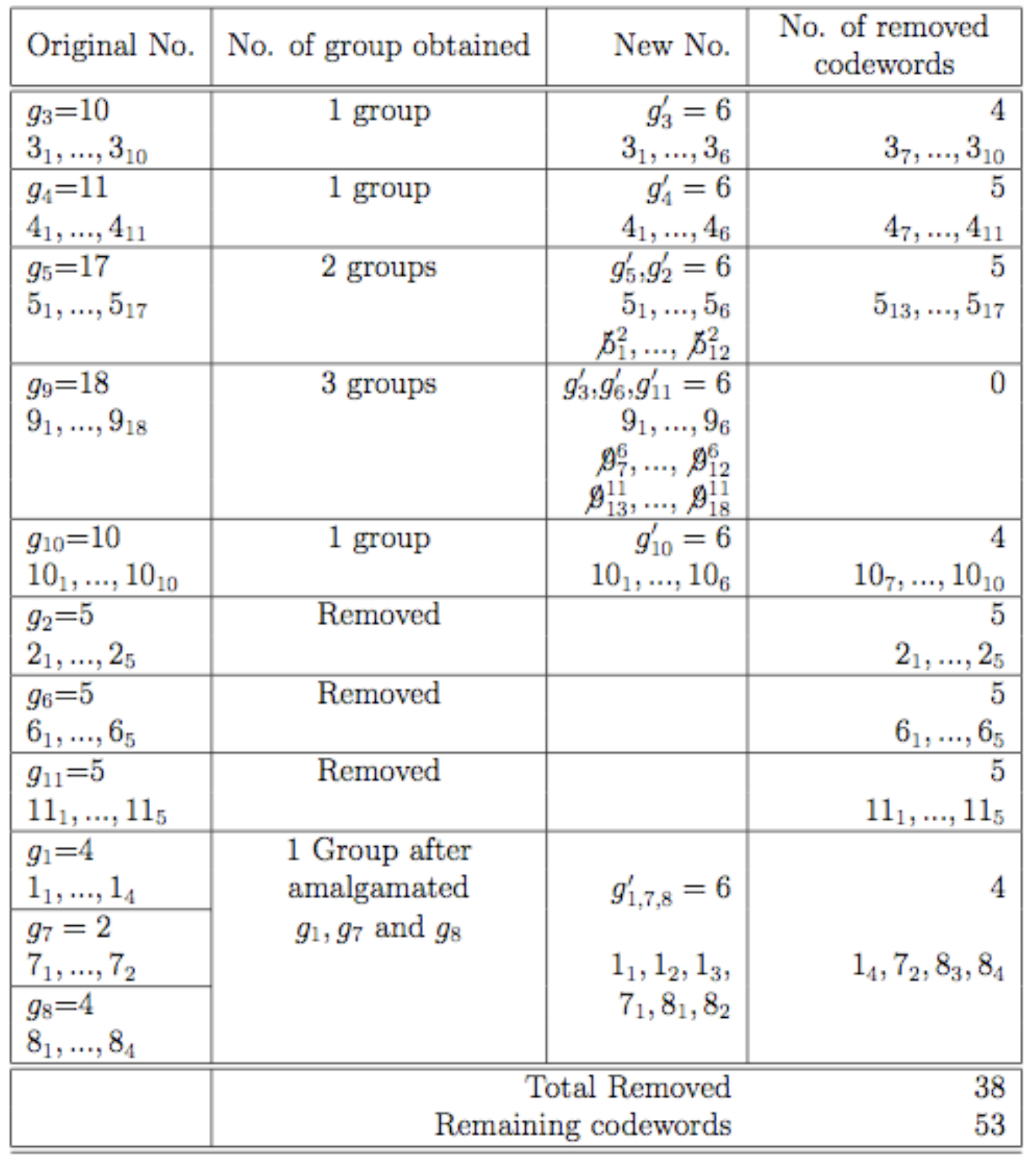}  
\end{center} 
\caption{Groups Dividing in Example \ref{2con}}  
\label{tablefig}
\end{figure}
\end{example}

Now, we prove Theorem \ref{2levelcon} using the similar approach as in Example \ref{2con}.
\begin{proof}[Proof of Theorem 2]
Let $q$ and $\ell$ be integers greater than 1. Let $C$ be a one-level code length $\ell$ over an alphabet $Q$, where $|Q|=q$. Let $g$ be a positive integer less than or equal to $q$. Denote $\left\lceil\frac{|C|}{2g}\right\rceil$ by $p$. For each symbol $a \in Q$, let $G_a=\{x \in C: x_1=a\}$ and denote the size of $G_a$ by $g_a$. And let $g_a=\alpha_a p + \beta_a$, where $\alpha_a,\beta_a$ are integers and $0 \leq \beta_a < p$. Let $Q_1$ be $\{a \in Q: \alpha_a > 0\}$, $q_1= |Q_1|$ and $v=\displaystyle\sum_{a \in Q} \alpha_a=\displaystyle\sum_{a \in Q_1} \alpha_a$.  We are now trying to construct $g$ groups of $p$ codewords which the first co-ordinate of each group is different from the others. As in Example \ref{2con}, we proceed through the three main steps: splitting, amalgamating and replacing.
 
\noindent\textbf{Step 1: Splitting}\\
Now, for each $a$ in $Q_1$, we pick $\alpha_ap$ codewords from each $G_a$, then divide these codewords into $\alpha_a$ sets of $p$ codewords.
At this stage, we obtain $v$ disjoint sets of $p$ codewords, namely $C_1, C_2, ..., C_v$, with the property that all the codewords within the same set have the same symbol in the first co-ordinate.  However, some of the symbols are still being used by more than one group.

If $v \geq g$ we are done: for $i \in \{1,2, ..., g\}$, we replace the first co-ordinate of the codewords in $C_i$ by symbol $i$ to form subset $C^{\prime}_i$, and define $C^{\prime} = \displaystyle\bigcup_{i=1}^g C^{\prime}_i$. So without loss of generality, we assume $v<g$. To construct the first $v$ groups from $C_1, C_2, ..., C_v$
, we need $v$ different symbols to replace the first co-ordinate of each group. Besides the $q_1$ symbols in $Q_1$, we need $v-q_1$ extra symbols.
Let $Q_2$ be any subset of $Q$ of cardinality $v$ containing $Q_1$. Throw away all codewords in $G_a$ where $a \in Q_1\backslash Q_2$.

\noindent\textbf{Step 2: Amalgamating}\\
We merge some of the remaining $G_a$, where $a \in Q \backslash Q_1$, into $g-v$ groups of size between $p$ and $2p-2$, where the first co-ordinates of each group are different from the other groups. This can be done as follows. 


Consider
\begin{align*}
\displaystyle\sum_{a \in Q\backslash Q_2} g_a &= |C|-\displaystyle\sum_{a \in Q_2} g_a\\
&= |C|-\displaystyle\sum_{a \in Q_2} \left(\alpha_ap + \beta_a\right)\\
&=|C|- \left(\displaystyle\sum_{a \in Q_1} \alpha_ap + \displaystyle\sum_{a \in Q_2} \beta_a\right)\\
&\geq|C|- \left(vp + v(p-1)\right)\\
&\geq 2gp-vp-v(p-1)\\
&=2(g-v)p+v.
\end{align*}
Hence, apart from 
$\displaystyle\cup_{a \in Q_2}G_a$, we have at least $2(g-v)p+v > 2(g-v)p$ codewords left in the system. And, since $g_a=\alpha_a + \beta_a=0 + \beta_a=\beta_a \leq p-1$ for all $a \in   Q\backslash Q_2$, we can group $g_a, a \in Q\backslash Q_2$ in a greedy fashion into $g-v$ sets of size between $p$ and $2p-2$, where each $a \in Q\backslash Q_2$ is allowed to merge into at most one set. Name the sets $C_{v+1}, C_{S+2}..., C_g$. Observe that the first co-ordinate of codewords in each set varies, but differs from any other groups.
 
\noindent\textbf{Step 3: Replacing}\\
Here we construct  the groups of 
$C^{\prime}$ 
as follows: Let $Q_2=\{a_1, a_2, ..., a_g\}$. 
\begin{enumerate}
	\item for $i=1$ to $v$, let $C_i^{\prime}$ be a set of codewords obtained from $C_i$ by replacing the first co-ordinate by the symbol $a_i \in Q_2$,
	\item for $i=v+1$ to $g$, let $C_i^{\prime}$ be a set of any $p$ codewords from $C_i$,
         \item let $C^{\prime} = \displaystyle\bigcup_{i=1}^g C^{\prime}_i$.
\end{enumerate}

Now, we need to show that our constructed code $C^{\prime}$ satisfies Theorem  \ref{2levelcon}.

Let the mapping $\varphi:C^{\prime} \longrightarrow C$ map 
each codeword of $C^{\prime}$ to its corresponding codeword in $C$. It is not difficult to see that $\varphi$ is an injection that makes changes in only the first co-ordinate of any codeword.

Hence, we are now obtain a code $C^{\prime}$ contains $g$ groups of the same size, each at least $\left\lceil\frac{|C|}{2g}\right\rceil$ with the property that the first co-ordinate of codewords in each group are distinct from any other groups, and an injection $\varphi$ from $C^{\prime}$ to $C$ with changes occur only in the first co-ordinate of the codewords.

Note that to construct $C^{\prime}$, we have eliminated $\left(\displaystyle\sum_{a \in Q} \beta_a\right) - (g-v)p$ codewords from $C$. Here the first term represents all the remainders, and the second term is derived form $(g-v)$ amalgamated groups that were taken back from the thrown away remainders. Now  $\left(\displaystyle\sum_{a \in Q} \beta_a\right) - (g-v)p = (|C|-vp) - (g-v)p = |C| - gp \leq |C| -\frac{|C|}{2}=\frac{|C|}{2}$. Hence we eliminate at most $\frac{|C|}{2}$ codewords from $|C|$.

\end{proof}

For any one-level FP, SFP and IPP code, the two-level code satisfying Theorem \ref{2levelcon} has the corresponding two-level fingerprinting property. To make it more convenient for us to show this in the next subsection, we define some mappings and a lemma we need here. 

Let the mapping $\pi:Q \longrightarrow Q$ be defined as follows.  Let $\pi(a)=a$ when $a$ does not appear as the first co-ordinate of any codeword of $C^{\prime}$: otherwise 
let $\pi(a)=b \in Q$ when there exists a codeword $c^{\prime} \in C^{\prime}$ with $c^{\prime}_1=a$ that was derived from $c \in C$ with $c_1=b$. 

Let the mapping $\psi:Q^{\ell} \longrightarrow Q^{\ell}$ be defined by mapping $x\in Q^{\ell}$ to $\psi(x)\in Q^{\ell}$ where
\begin{align*}
\psi(x)_i = \begin{cases} {\pi(x_i)} &\text{if }i=1; \\ x_i &\text{otherwise.} \end{cases} 
\end{align*} 
It is not difficult to see that $\psi$ is a well-defined function and $\varphi$ from Theorem \nolinebreak \ref{2levelcon} is actually $\psi$ when restricted to $C^{\prime}$, i.e. $\varphi=\psi|_{C^{\prime}}$.

Observe that for any $i \in \{1, 2, ..., \ell\}$ and any codewords $y,z \in C^{\prime}$, if $y_i=z_i$, then $\varphi(y)_i=\varphi(z)_i$. Moreover, $\varphi(y)_i=y_i=z_i=\varphi(z)_i$ when $i \not = 1$.

\begin{lemma}\label{srb}Let $C$ and $C^{\prime}$ be codes length $\ell$ over $Q$ satisfying Theorem \ref{2levelcon}. Let $X$ be a subset of $C^{\prime}$. Then
\begin{align*}
\psi(\desc(X))\subseteq\desc(\psi(X)).
\end{align*}
\end{lemma}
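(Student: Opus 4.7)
The plan is to unpack the definition of descendant coordinate by coordinate, leveraging the fact that $\psi$ acts by a fixed coordinate-wise rule (identity in positions $i\neq 1$, and the fixed map $\pi$ in position $1$). The crucial structural observation is that in each coordinate $i$, the value $\psi(x)_i$ depends only on $x_i$, by a rule that does not depend on $x$ itself. Consequently, whenever two vectors agree in coordinate $i$, their images under $\psi$ still agree in coordinate $i$.

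First I would take an arbitrary $d \in \desc(X)$ and set out to verify $\psi(d) \in \desc(\psi(X))$. Fix any $i \in \{1,\ldots,\ell\}$. By definition of $\desc(X)$ there exists $x \in X$ with $d_i = x_i$. Applying $\psi$ in the single coordinate $i$ (which is either the identity, if $i\neq 1$, or the fixed map $\pi$, if $i=1$), I get $\psi(d)_i = \psi(x)_i$. Since $\psi(x) \in \psi(X)$, this exhibits the required witness in coordinate $i$ for $\psi(d)$ to lie in $\desc(\psi(X))$. Doing this for every $i$ completes the proof.

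This really amounts to the trivial remark already flagged right before the lemma (``for any $i \in \{1,\ldots,\ell\}$ and any codewords $y,z \in C'$, if $y_i = z_i$, then $\varphi(y)_i = \varphi(z)_i$''), generalized from $\varphi$ on $C'$ to $\psi$ on all of $Q^{\ell}$. So I do not expect any real obstacle: the statement is essentially the assertion that coordinate-wise functions commute with the coordinate-wise descendant operation. The only thing to be a little careful about is that $\psi$ is well-defined as a function on $Q^{\ell}$ (so that $\psi(d)$ makes sense for a general $d$, not only for $d \in C'$), which is already noted in the paragraph introducing $\pi$ and $\psi$.
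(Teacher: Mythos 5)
Your argument is correct and is essentially the same as the paper's own proof: both take an arbitrary descendant, fix each coordinate $i$, pick a witness $x^i \in X$ agreeing in that coordinate, and use the fact that $\psi$ acts coordinate-wise (so agreement in coordinate $i$ is preserved) to exhibit $\psi(x^i)$ as the required witness in $\desc(\psi(X))$. No gap; your explicit remark that $\psi$ must be well-defined on all of $Q^{\ell}$ matches the paper's setup.
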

\begin{proof}
Let $X$ be a subset of $C^{\prime}$ and let $y$ be a codeword in $\psi(\desc(X))$. Then, there exists a codeword $x$ in $\desc(X)$ such that $\psi(x)=y$. For any component $i$ in $\{1, 2, ..., \ell\}$, there exists a codeword $x^i$ in $X$, where $x_i=x^i_i$ . Hence $\psi(x)_i=\psi(x^i)_i$ for all $i \in \{1, 2, ..., \ell\}$. Which implies $\psi(x) \in \desc(\{\psi(x^1), \psi(x^2), ..., \psi(x^{\ell})\}) \subseteq \desc(\psi(X))$. Therefore $y \in \desc(\psi(X))$, which implies $\psi(\desc(X))\subseteq\desc(\psi(X))$.
\end{proof}
%
%
\subsection{Existence of Codes}\label{seccon}
Here we demonstrate that the codes $C^{\prime}$ satisfying Theorem \ref{2levelcon} are two-level FP, SFP or IPP code if the original codes $C$ are respecting FP, SFP or IPP codes. Also, we provide an example showing that the two-level code constructed from  a TA code using Theorem \ref{2levelcon} does not always possess two-level TA property.
\begin{theorem}\label{2lfp}
Let $t, q, g$ and $\ell$ be integers greater than 1, where $g \leq q$. And let $T$ be any integer greater than $t$. Suppose there exists a $q$-ary length $\ell$ one-level $t$-FP code $C$. Then there exists a $q$-ary length $\ell$ two-level $(T, t)$-FP code $C^{\prime}$ of cardinality at least $\frac{|C|}{2}$, where $C^{\prime}$ contains $g$ groups of the same size.
\end{theorem}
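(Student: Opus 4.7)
The plan is to feed the given one-level $t$-FP code $C$ into Theorem \ref{2levelcon} to obtain the code $C'$, the partition $C' = C'_1 \cup \cdots \cup C'_g$, the injection $\varphi : C' \to C$, and the auxiliary map $\psi : Q^\ell \to Q^\ell$. The size bound $|C'| \geq |C|/2$ and the requirement that the $g$ groups have equal size are then immediate from Theorem \ref{2levelcon}. What remains is to verify the two clauses (a) and (b) of the $(T,t)$-FP definition.

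For clause (a) I would show that $C'$ inherits the $t$-FP property as a one-level code by pushing the descent relation forward along $\psi$ and back along $\varphi$. Concretely, suppose $X \subseteq C'$ with $|X| \leq t$ and $y \in \desc(X) \cap C'$; the goal is $y \in X$. By Lemma \ref{srb},
\begin{align*}
\psi(y) \in \psi(\desc(X)) \subseteq \desc(\psi(X)) = \desc(\varphi(X)),
\end{align*}
and clearly $\psi(y) = \varphi(y) \in C$ while $\varphi(X) \subseteq C$ with $|\varphi(X)| \leq t$ since $\varphi$ is injective. The $t$-FP property of $C$ then forces $\varphi(y) \in \varphi(X)$, and injectivity of $\varphi$ yields $y \in X$.

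For clause (b) I would exploit property 3 of Theorem \ref{2levelcon}, which states that the first coordinates used in any one group of $C'$ are disjoint from those used in every other group. Let $X \subseteq C'$ with $|X| \leq T$ and let $x \in \desc(X) \cap C'$. By the definition of $\desc$, some $x' \in X$ satisfies $x'_1 = x_1$. Since $x, x' \in C'$ share a first coordinate, property 3 forces $x$ and $x'$ to lie in the same group, so $\mathcal{G}(x) = \mathcal{G}(x') \in \mathcal{G}(X)$, which is exactly what clause (b) requires.

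I do not anticipate a serious obstacle: both clauses follow mechanically once Theorem \ref{2levelcon} and Lemma \ref{srb} are in hand. The only slightly delicate point is ensuring that the lift of descent along $\psi$ in clause (a) is applied correctly — the injectivity of $\varphi$ and the fact that $\varphi = \psi|_{C'}$ are what make the argument close up. Clause (b) is essentially a one-line consequence of the first-coordinate separation built into the construction.
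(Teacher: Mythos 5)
Your proposal is correct and follows essentially the same route as the paper's proof: obtain $C^{\prime}$ from Theorem \ref{2levelcon}, transfer the $t$-FP property from $C$ via Lemma \ref{srb} and the injection $\varphi=\psi|_{C^{\prime}}$, and settle the group-level clause by the first-coordinate separation between groups. If anything, you are slightly more explicit than the paper (e.g.\ invoking injectivity of $\varphi$ to pass from $\varphi(y)\in\varphi(X)$ back to $y\in X$), but the argument is the same.
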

\begin{proof}
Let $C^{\prime}$ be a code obtained from the $t$-FP code $C$ as in Theorem \nolinebreak \ref{2levelcon}.
It is easy to see that none of the codewords in $C^{\prime}$ can frame the codeword outside its own group, since any pair of codewords from the different groups have different symbols in the first co-ordinate. So, only $t$-FP property of $C^{\prime}$ needs to be proved.

Let $U$ be any subset of $C^{\prime}$ containing at most $t$ codewords. Let $x \in \desc(U)\cap C^{\prime}$. We will show that $x\in U$. Since $x \in \desc(U)\cap C^{\prime}$, we have $\varphi (U)\subseteq C$ and $|\varphi (U)| \leq |U| \leq t$.

Since $x \in \desc(U)\cap C^{\prime}$, then  $x \in \desc(U)$ and $x \in C^{\prime}$. By Lemma\ref{srb}, $\varphi(x) \in \desc(\varphi(X))$. Also it is easy to see that $\varphi(C^{\prime}) \subseteq C$, hence $\varphi(x) \in \desc(\varphi(X)) \cap C$. This makes $\varphi(x) \in \varphi(U)$ by the $t$-FP property of $C$. Hence $x\in U$, which implies $C^{\prime}$ has $t$-FP property, i.e. $C^{\prime}$ is a $(T,t)$-FP code.

%
Then we can conclude that there exists a $q$-ary length $\ell$ two-level $(T,t)$-FP code $C^{\prime}$ of size at least $\frac{|C|}{2}$, containing $g$ groups (each of size at least $\left\lceil\frac{|C|}{2g}\right\rceil$).
\end{proof}
 
\begin{theorem}\label{2lsfp}
Let $t, q, g$ and $\ell$ be integers greater than 1, where $g \leq q$. And let $T$ be any integer greater than $t$. Suppose there exists a $q$-ary length $\ell$ one-level $t$-SFP code $C$. Then there exists a $q$-ary length $\ell$ two-level $(T, t)$-SFP code $C^{\prime}$ of cardinality at least $\frac{|C|}{2}$, where $C^{\prime}$ contains $g$ groups of the same size.
\end{theorem}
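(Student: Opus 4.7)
The plan is to mirror the structure of the proof of Theorem \ref{2lfp} as closely as possible: apply Theorem \ref{2levelcon} to the $t$-SFP code $C$ to obtain $C^{\prime}$, and then check the two clauses of the two-level SFP definition separately. Let $\varphi = \psi|_{C^{\prime}}$ be the injection of Theorem \ref{2levelcon}. By construction $C^{\prime}$ consists of $g$ groups of the same size, each at least $\lceil |C|/(2g)\rceil$, and $|C^{\prime}|\geq |C|/2$, so the size and partition conditions are immediate.

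For the ``big'' clause (for all $X_0,X_1\subseteq C^{\prime}$ with $|X_0|,|X_1|\leq T$, $\desc(X_0)\cap\desc(X_1)\neq\emptyset$ implies $\mathcal{G}(X_0)\cap\mathcal{G}(X_1)\neq\emptyset$), I would argue directly from the first-coordinate property, without using $t$-SFP of $C$ at all. Pick any $d\in\desc(X_0)\cap\desc(X_1)$. Then there exist $x^{0}\in X_0$ and $x^{1}\in X_1$ with $x^{0}_1 = d_1 = x^{1}_1$. By property (3) of Theorem \ref{2levelcon}, any two codewords of $C^{\prime}$ sharing a first coordinate lie in the same group, so $\mathcal{G}(x^{0})=\mathcal{G}(x^{1})\in\mathcal{G}(X_0)\cap\mathcal{G}(X_1)$. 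This is why the clause holds for arbitrary $T\geq t$ and is in fact the easy half.

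For the ``small'' clause (that $C^{\prime}$ is $t$-SFP as a one-level code), I would lift to $C$ via $\varphi$ using Lemma \ref{srb}. Let $X_0,X_1\subseteq C^{\prime}$ with $|X_i|\leq t$ and pick $d\in\desc(X_0)\cap\desc(X_1)$. Applying $\psi$ and using Lemma \ref{srb} twice gives
\begin{align*}
\psi(d)\in\psi(\desc(X_0))\cap\psi(\desc(X_1))\subseteq\desc(\psi(X_0))\cap\desc(\psi(X_1)) = \desc(\varphi(X_0))\cap\desc(\varphi(X_1)).
\end{align*}
Since $\varphi(X_0),\varphi(X_1)\subseteq C$ with $|\varphi(X_i)|\leq |X_i|\leq t$, the $t$-SFP property of $C$ yields $\varphi(X_0)\cap\varphi(X_1)\neq\emptyset$; since $\varphi$ is an injection, pulling back gives $X_0\cap X_1\neq\emptyset$, as required.

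Combining the two clauses shows $C^{\prime}$ is a $(T,t)$-SFP code of the claimed size and group structure. I do not expect a real obstacle here: the only subtle point is to remember that the ``big'' clause is established purely from the first-coordinate separation built into Theorem \ref{2levelcon} and is entirely independent of $T$, while the ``small'' clause is handled by transferring the problem back to $C$ through Lemma \ref{srb} and the injection $\varphi$, exactly as in the FP case.
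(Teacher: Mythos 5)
Your proposal is correct and follows essentially the same route as the paper's own proof: the ``small'' clause is transferred to $C$ via $\psi$ and Lemma \ref{srb} together with the injectivity of $\varphi$, and the ``big'' clause is deduced, independently of $T$, purely from the fact that the first co-ordinate of a codeword of $C^{\prime}$ determines its group. No gaps; your write-up is, if anything, slightly cleaner in stating explicitly that the group-level clause uses only property (3) of Theorem \ref{2levelcon}.
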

\begin{proof}

Let $C^{\prime}$ be a code obtained from the $t$-SFP code $C$ as in Theorem \nolinebreak \ref{2levelcon}.
 
\begin{enumerate}
\item Let $X_1, X_2$ be subsets of $C$ of size at most $t$, where $\desc(X_1) \cap \desc(X_2) \not = \emptyset$. We will show that $X_1 \cap X_2 \not = \emptyset$.
 
 Let $x \in \desc(X_0) \cap \desc(X_1)$. Then $\varphi(x) = \psi(x) \in \psi(\desc(X_0) \cap \desc(X_1)).$ Now 
\begin{align*}
\varphi(x) &\in \psi(\desc(X_1) \cap \desc(X_2))\\
&\subseteq \psi(\desc(X_1)) \cap \psi(\desc(X_2))\\
&\subseteq \desc(\psi(X_1))\cap \desc(\psi(X_2)) \text{ by Lemma } \ref{srb}\\
&= \desc(\varphi(X_1))\cap \desc(\varphi(X_2)) \text{}. 
\end{align*}
Therefore $\desc(\varphi(X_1))\cap \desc(\varphi(X_2)) \not = \emptyset$. By the $t$-SFP property of $C$, we deduce that $\varphi(X_1) \cap \varphi(X_2) \not = \emptyset$. Since $\varphi$ is an injection, $\varphi(X_1) \cap \varphi(X_2)=\varphi(X_1 \cap X_2)$. Therefore $X_1 \cap X_2 \not = \emptyset$, which implies $C^{\prime}$ has the $t$-SFP property.
 
\item Let $Y_1, Y_2$ be subsets of $C$ of size at most $T$, where $\desc(Y_1) \cap \desc(Y_2) \not = \emptyset$. We will show that $\mathcal{G}(Y_0) \cap \mathcal{G}(Y_1) \not = \emptyset$.
 
 Let $x \in \desc(Y_1) \cap \desc(Y_2)$. Then there exist codewords $a$ in $Y_1$ and $b$ in $Y_2$, where $a_1=x_1=b_1$. Since the first co-ordinate of each group is different from the others, we can conclude that $\mathcal{G}(a) = \mathcal{G}(b)$. Therefore $\mathcal{G}(a) \in \mathcal{G}(Y_1) \cap \mathcal{G}(Y_2) \not = \emptyset$, so $C^{\prime}$ is a $(T,t)$-SFP code.
\end{enumerate}
\end{proof}

\begin{theorem}\label{2lipp}
Let $t, q, g$ and $\ell$ be integers greater than 1, where $g \leq q$. And let $T$ be any integer greater than $t$. Suppose there exists a $q$-ary length $\ell$ one-level $t$-IPP code $C$. Then there exists a $q$-ary length $\ell$ two-level $(T, t)$-IPP code $C^{\prime}$ of cardinality at least $\frac{|C|}{2}$, where $C^{\prime}$ contains $g$ groups of the same size.
\end{theorem}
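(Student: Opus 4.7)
The plan is to apply Theorem~\ref{2levelcon} to the given $t$-IPP code $C$ to obtain a code $C'$ of cardinality at least $|C|/2$, partitioned into $g$ equal-sized groups whose first coordinates are pairwise disjoint across groups, and then to verify the two clauses of the $(T,t)$-IPP definition in the spirit of the proofs of Theorems~\ref{2lfp} and~\ref{2lsfp}. The injection $\varphi : C' \to C$ together with Lemma~\ref{srb} will again provide the bridge that transfers the one-level IPP property of $C$ to $C'$, while the structure of the first coordinates granted by Theorem~\ref{2levelcon} will handle the group-level condition.

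For clause (a), that $C'$ is $t$-IPP when viewed as a one-level code, I would let $x \in \desc_t(C')$ and choose some witnessing $X_0 \subseteq C'$ with $|X_0| \le t$ and $x \in \desc(X_0)$. Lemma~\ref{srb} gives $\psi(x) \in \desc(\varphi(X_0))$, so $\psi(x) \in \desc_t(C)$. The $t$-IPP property of $C$ then supplies some $c^* \in C$ lying in every $Y \subseteq C$ with $|Y| \le t$ and $\psi(x) \in \desc(Y)$. Taking $Y = \varphi(X_0)$ yields $c^* \in \varphi(X_0) \subseteq \varphi(C')$, so $c^* = \varphi(c')$ for a unique $c' \in C'$. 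Now, for any $X \subseteq C'$ with $|X| \le t$ and $x \in \desc(X)$, Lemma~\ref{srb} gives $\psi(x) \in \desc(\varphi(X))$, hence $\varphi(c') = c^* \in \varphi(X)$, and the injectivity of $\varphi$ promotes this to $c' \in X$. Thus $c'$ lies in $\bigcap_{X} X$, proving the one-level $t$-IPP property.

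For clause (b), I exploit the fact that no two groups of $C'$ share a first coordinate. Given any $x \in \desc_T(C')$ and any $X \subseteq C'$ with $|X| \le T$ and $x \in \desc(X)$, there is some $a \in X$ with $a_1 = x_1$, and the group index $\mathcal{G}(a)$ is determined entirely by $x_1$. Hence this single group index lies in $\mathcal{G}(X)$ for every admissible $X$, so the intersection $\bigcap_{X} \mathcal{G}(X)$ is nonempty. This part is essentially identical in spirit to clause~(b) of Theorem~\ref{2lsfp}.

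The only subtlety I expect is in clause~(a): one must check that the parent $c^*$ supplied by the one-level IPP property of $C$ actually lies in $\varphi(C')$, so that it pulls back to a codeword of $C'$. This is the step that has no analogue in the SFP or FP proofs, but it is guaranteed by specialising to any realising $X_0$ and noting $c^* \in \varphi(X_0) \subseteq \varphi(C')$. Once that is established, the remainder of the argument is a routine transfer between $C$ and $C'$ via the injection $\varphi$, following the template already used in Theorems~\ref{2lfp} and~\ref{2lsfp}.
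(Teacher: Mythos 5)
Your proposal is correct and follows essentially the same route as the paper: apply Theorem~\ref{2levelcon}, transfer the one-level $t$-IPP property via Lemma~\ref{srb} and the injectivity of $\varphi$, and settle the group-level condition using the fact that distinct groups of $C^{\prime}$ use distinct first coordinates. The only difference is cosmetic — you pull back an explicit common parent $c^{*}$ element-wise, whereas the paper argues with set containments and the identity $\bigcap_{X}\varphi(X)=\varphi\bigl(\bigcap_{X}X\bigr)$ — and your observation that $c^{*}\in\varphi(X_0)\subseteq\varphi(C^{\prime})$ is exactly what makes the pull-back legitimate.
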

\begin{proof}
Let $C^{\prime}$ be a code obtained from the $t$-IPP code $C$ as in Theorem \nolinebreak \ref{2levelcon}.

\begin{enumerate}
\item Let $x \in \desc_{t}(C^{\prime})$. Then, there exists $U \subseteq C^{\prime}$ such that $|U| \leq t$ and $x \in \desc(U)$. By Lemma \ref{srb}, we know that $\psi(x) \in \desc(\varphi(U))$ and $\varphi(U) \subseteq C$. Observe that $|\varphi(U)|\leq|U|\leq t$. Hence $\psi(x) \in \desc_{t}(C)$. Since $C$ is an IPP code, 
\begin{align*}
\displaystyle\bigcap_{\substack{X \subseteq C : |X| \leq t \\ \psi(x) \in \desc(X)}} X \not = \emptyset.
\end{align*} 
Also, for any $X \subseteq C^{\prime}$, $x \in \desc(X)$ implies $\psi(x) \in \desc(\varphi(X))$ and $|X|=|\varphi(X)|$. Hence
\begin{align*}
\displaystyle\bigcap_{\substack{X \subseteq C : |X| \leq t \\ \psi(x) \in \desc(X)}} X \subseteq \displaystyle\bigcap_{\substack{X \subseteq C^{\prime} : |X| \leq t \\ x \in \desc(X)}} \varphi(X).
\end{align*}
Since $\varphi$ is injective, we have
\begin{align*}
\displaystyle\bigcap_{\substack{X \subseteq C^{\prime} : |X| \leq t \\ x \in \desc(X)}} \varphi(X) = \varphi(\displaystyle{\bigcap_{\substack{X \subseteq C^{\prime} : |X| \leq t \\ x \in \desc(X)}} X}).
\end{align*}
Hence
\begin{align*}
\varphi(\displaystyle{\bigcap_{\substack{X \subseteq C^{\prime} : |X| \leq t \\ x \in \desc(X)}} X})\not = \emptyset.
\end{align*}
Therefore
\begin{align*}
\displaystyle{\bigcap_{\substack{X \subseteq C^{\prime} : |X| \leq t \\ x \in \desc(X)}} X}\not = \emptyset,
\end{align*}
which shows that $C^{\prime}$ is a $t$-IPP code.
 
\item Let $y \in \desc_{T}(C^{\prime})$. Then, there exists $V \subseteq C^{\prime}$ such that $|V| \leq T$ and $y \in \desc(V)$. Let $v$ be a codeword in $V$ such that $v_1=y_1$, and let $i \in \{1, 2, ..., g\}$ such that $v \in C^{\prime}_i$. Hence $\mathcal{G}(v)=i$. For any $X \subseteq C^{\prime}$ of cardinality at most $T$ with $\desc(X)$ containing $y$, there exists a codeword $y^X$ such that $y_1=y^X_1$. Since the group index of a codeword can be determined from its first co-ordinate, we have $\mathcal{G}(y^X)=i$. That implies
\begin{align*}
i \in \displaystyle\bigcap_{\substack{X \subseteq C^{\prime} : |X| \leq T \\ y \in \desc(X)}} \mathcal{G}(X).
\end{align*}
Consequently,
\begin{align*}
\displaystyle\bigcap_{\substack{X \subseteq C^{\prime} : |X| \leq T \\ y \in \desc(X)}} \mathcal{G}(X) \not = \emptyset.
\end{align*}
\end{enumerate}
Therefore, $C^{\prime}$ is a $(T,t)$-IPP code.
\end{proof}

The two-level codes satisfying Theorem \ref{2levelcon} preserve the fingerprinting property from their corresponding one-level codes for IPP, SFP and FP codes. However, this is not always true in the case of TA codes as can be seen in the following example.

\begin{example}
Let $C =\{011, 022, 033, 044, 105, 206, 307, 408, 550, 660, 770, 880\} \subseteq \{0, 1,$ $2, ..., 8\}^3$. It is not difficult to check that $C$ is a $2$-TA code. Let $g=4$, then $p=\left\lceil\frac{12}{8}\right\rceil=2$. Then Theorem \ref{2levelcon} does not guarantee two-level traceability code from $C$.
\end{example}
\begin{proof}
Here we have $g_1=4$, $g_2=g_3=...=g_8=1$. Consider $C_1=\{011, 022\}$, $C_2=\{033, 044\}$, $C_3=\{105, 550\}$ and $C_4=\{206, 660\}$, which leads to $C^{\prime}_1=\{011, 022\}$, $C^{\prime}_2=\{833, 844\}$, $C^{\prime}_3=\{105, 550\}$ and $C^{\prime}_4=\{206, 660\}$ by Theorem \ref{2levelcon}. Let $U=\{011,105, 550\}$, then $000 \in \desc(U)$ and $\mathcal{G}(U)=\{1,3\}$. Observe that $206$ is a codeword of $C^{\prime}$ with $d_H(000,206)$ minimal, but $\mathcal{G}(206)=4 \not \in \mathcal{G}(U)$. Therefore $C^{\prime}$ is not a $(K,2)$-TA code for any integer $T$ greater than 2.
\end{proof}
%
\section{Conclusion and Open problems}
Theorem 2 ensures that we can always construct two-level IPP, SFP and FP codes, with $g\leq q$, of size at least half of the size of one-level codes. When one-level code of exponential size, throwing away half of the codewords would not effect the codes' size significantly. However, we do not have the same result for TA codes. Hence the following question comes up naturally. 
\\\\
\textbf{Question 1:} Let $g \leq q$, and let $C$ be a $q$-ary $t$-TA code of length $\ell$, does there always exist a $q$-ary $(T, t)$-TA code $C^{\prime}$ of length $\ell$ of cardinality at least a half of the original code $C$, containing $g$ groups? 
\\\\
We believe the answer to this question is yes. 

The results in this paper require the number of groups to be small. 
\\\\
\textbf{Question 2:} Are there any good constructions of two-level fingerprinting codes when the number of groups is greater than the alphabet size? 
\\\\
And the next question follows.
\\\\
\textbf{Question 3:} Are there any good upper bounds on the size of two-level codes that are significantly better than the one-level case? 
\\\\
We believe that, in general cases, the bounds will be significantly better than the one-level case and will begin to depend on $T$ when $g$ grows bigger.

\bibliographystyle{plain}	
\bibliography{120109paper}	
\end{document}